\newtheorem{proposition}{Proposition}
\newtheorem{theorem}{Theorem}
\newtheorem{corollary}{Corollary}
\theoremstyle{definition}
\newtheorem{remark}{Remark}
\newtheorem*{exas}{Examples}%[section]
\newtheorem*{exa}{Example}%[section]
\def\co{\colon\thinspace}
\newcommand{\cal}{\EuScript}
\renewcommand{\leq}{\leqslant}
\renewcommand{\geq}{\geqslant}
\DeclareMathOperator{\Ber}{Ber}
\DeclareMathOperator{\Mat}{Mat}
\DeclareMathOperator{\tr}{tr} 
\DeclareMathOperator{\str}{str}
\renewcommand{\L}{{\cal L}}
\DeclareMathOperator{\ber}{ber}
\DeclareMathOperator{\Sym}{Sym}
\DeclareMathOperator{\ev}{ev}
\newcommand{\wed}{\wedge}
\newcommand{\RR}{\mathbb R}
\newcommand{\ZZ}{{\mathbb Z}}
\newcommand{\CC}{\mathbb C}
\newcommand{\NN}{{\mathbb N}}
\newcommand{\f}{\mathbf{f}}
\newcommand{\g}{\mathbf{g}}
\def\a{\alpha}
\renewcommand{\b}{\beta}
\newcommand{\ps}{{{\psi}}}
\newcommand{\xp}{{\mathbf{x}}}
\DeclareMathOperator{\alg}{{\cal Alg}}
\title[A short proof of the Buchstaber--Rees theorem]{A short proof of the Buchstaber--Rees theorem}
\author{H.~M.~Khudaverdian}
\author{Th.~Th. Voronov}
\address{School of Mathematics,  University of Manchester, Oxford Road,  Manchester   M13 9PL,  UK}
\email{theodore.voronov@manchester.ac.uk \\ khudian@manchester.ac.uk}
\keywords{Berezinian (superdeterminant), Frobenius recursion, symmetric powers, maps of algebras, $n$-homomorphism, $p|q$-homomorphism}
\subjclass[2000]{15A15, 58A50, 81R99}
\dedicatory{To the memory of Robin Bullough}
\begin{document}

\maketitle
\begin{abstract} We give a short proof of the Buchstaber--Rees
theorem concerning symmetric powers. The proof is based on the
notion of a formal characteristic function of a linear map of
algebras.
\end{abstract}

\section{Introduction}

This paper is based on a talk given at the Robin K. Bullough Memorial Symposium in June 2009. Robin Bullough had been always interested in new algebraic and geometric  structures arising from integrable systems  theory (and mathematical physics in general) as well as in philosophical understanding of what `integrability' is. We had had numerous conversations on that with him in the last ten years. The subject of this paper is related with both aspects. We use a method inspired by our studies in supermanifold geometry (namely, Berezinians and associated structures, ultimately rooted in quantum physics) to give a short direct proof of a theorem of Buchstaber and Rees concerning symmetric powers of algebras and spaces. Buchstaber and Rees's notion of an ``$n$-homomorphism'', motivated by the earlier studies of $n$-valued groups, cannot be separated from integrable systems in the broad sense. In recent years an understanding of `integrability' of various objects has spread, according to which an `integrable' case of any notion (a system of ODEs, a function, a manifold, ...) is a case somehow distinguished and discrete within the continuum of `generic' (or `non-integrable') cases. It is often related with some non-trivial algebraic identities. An approach to linear maps of algebras that we have put forward (see below), allows to isolate a hierarchy of `good' classes of such maps, which may be also regarded as `integrable'. The  notions of algebra homomorphisms and ``$n$-homomorphisms'' find  their natural place in such a hierarchy based on the analysis of a formal characteristic function of a linear map of algebras that we introduced. We also see the next step of this hierarchy (the ``$p|q$-homomorphisms'', which we hope to study further elsewhere).

In this paper we give a brief exposition of our general method together with its  very concrete application, which is in the title.

A  theory of ``$n$-homomorphisms'' (or ``Frobenius $n$-homomorphisms'')  of algebras was developed in a series of papers of V.~M.~Buchstaber and E.~G.~Rees (see particularly \cite{BuRees2002}, \cite{BuRees2004}, \cite{BuRees2008} and
references therein). We   recall the definition of an $n$-homomorphism in section~\ref{sec.nhom} below.  This notion originated
in the  studies of an analog for multi-valued groups of the Hopf algebra of functions on a group (namely, for an $n$-valued group, the coproduct in such an analog is an $n$-homomorphism) and was then identified
with a structure discovered by Frobenius in his theory of higher
group characters.

The main algebraic result of Buchstaber and Rees is the following
fundamental theorem. For commutative associative algebras with unit
$A$ and $B$ over real or complex numbers (the condition of the commutativity of $A$ can be relaxed),
under some technical assumptions on $B$, there is a one-to-one
correspondence between the algebra homomorphisms $S^nA\to B$ and the
$n$-homomorphisms $A\to B$. Here $S^nA\subset A^{\otimes n}$ is the
symmetric power of $A$ as a vector space with the   algebra
structure induced from $A^{\otimes n}$. In particular, when $A=C(X)$ for a compact Hausdorff
space $X$ and $B=\RR$, this gives the following extension of the
classical  theorem of \cite{GelKol1939}: for any $n$, the symmetric
power $\Sym^nX=X\times \ldots \times X\,/S_n$ is canonically
embedded into the linear space $C(X)^*$ so that the image of the
embedding is the set of all $n$-homomorphisms $C(X)\to \RR$. The
proof of this remarkable result in~\cite{BuRees2002} is a tour de force of combinatorial ingenuity.

In \cite{KhudVo2007a} we suggested the following construction. For
an arbitrary linear map $\f$ of a commutative algebra $A$ into a
commutative algebra $B$ we introduce a formal `characteristic
function' (a formal power series in $z$)
\begin{equation*}%\label{charfucnt}
        R_{\f}(a,z)=\exp\left(\f\left(\ln\left(1+az\right)\right)\right)%=1+\psi_1(a)z+ \psi_2(a)z^2+\psi_3(a)z^3+\dots
\end{equation*}
and consider classes of maps $\f$ such that $R_{\f}(a,z)$ is a
genuine function. A formal analysis of the behaviour of $R_{\f}(a,z)$ ``at the infinity'' leads to another crucial notion, of a `Berezinian' on the algebra $A$ associated with a linear map   $\f\co A\to B$ \,---\, shortly, an `$\f$-Berezinian'. The study of the characteristic function and the $\f$-Berezinian allows to single out certain classes of linear maps  `with good properties' among arbitrary linear maps of algebras, as follows.

Suppose the characteristic function $R_{\f}(a,z)$ is a linear function of the variable $z$. This precisely characterizes those linear maps $\f\co A\to B$ that are the ring homomorphisms.

Suppose the characteristic function $R_{\f}(a,z)$ is a polynomial function of the variable $z$. It can be shown that such a condition precisely characterizes the $n$-homomorphisms of algebras  in the sense of Buchstaber and
Rees (for some natural number $n$). Our construction gives  a different approach to their theory. In particular, by using $\f$-Berezinian (which is almost tautologically a multiplicative map), we obtain an effortless proof of the Buchstaber and
Rees main theorem stated above. See  Corollary~\ref{coro.multpsi} from Proposition~\ref{prop.berpsin} below, which is the crucial  point of the proof; the way how it is obtained is the main advantage of our approach. It also leads  to other substantial simplifications of the theory.

The next class of `good' linear maps of algebras $\f\co A\to B$ corresponds to the case when the characteristic function $R_{\f}(a,z)$ is  rational. We call them   the  \emph{$p|q$-homomorphisms}. Geometrically  they correspond to  some interesting generalization  of symmetric powers. This is briefly discussed in the last section. See also in paper~\cite{KhudVo2007b}.

Our approach is inspired by the previous work on invariants of
supermatrices. In paper~\cite{KhudVo2005}  we investigated the rational function
\begin{equation*}
    R_A(z)=\Ber (1+Az)\,,
\end{equation*}
where $A$ is an even operator on a superspace and $\Ber$ is its Berezinian (superdeterminant). It can be also written as $R_A(z)=e^{\str \ln(1+Az)}$ where $\str$ stands for supertrace, due to the relation $\Ber e^X=e^{\str X}\,$.  Comparing the expansions of $R_A(z)$ at zero and at infinity allowed us to establish non-trivial relations for the exterior powers of operators and spaces (in the Grothendieck ring) and in particular obtain a new formula for the Berezinian as the ratio of certain   polynomial invariants.

%Our approach provides a simple direct proof of the main theorem of
%Buchstaber and Rees. Below is an exposition of such a short proof.
%, and is the main difference with the approach of Buchstaber and Rees.

%In this note we recall   only those constructions from~\cite{tv:frob-umn} that are necessary for this purpose.

\section{The formal characteristic function}\label{sec.charfun}
Let $A$ and $B$ be two associative, commutative and unital  algebras
over $\CC$ or $\RR$. We shall study linear maps $\f\co A\to B$ that are \emph{not} assumed to be algebra homomorphisms. For a fixed such map $\f$, we say that an arbitrary function $\phi\co A\to B$ is
\textit{$\f$-polynomial}  if its values $\phi(a)$, where $a\in A$,
are given by a universal (i.e., independent of $a$) polynomial expression in $\f(a)$, $\f(a^2)$, \ldots\,. The ring of $\f$-polynomial functions is naturally graded so that the degree of $\f(a)$ is $1$, the degree of $\f(a^2)$ is $2$, etc.%It is a useful notion.

The \textit{characteristic function} of a linear map of algebras
$\f$  is defined as the  formal
power series with coefficients in $B$
\begin{equation}\label{charfucnt}
    R_{\f}(a,z):=\exp\left(\f\left(\ln\left(1+az\right)\right)\right)=1+\psi_1(a)z+
\psi_2(a)z^2+\psi_3(a)z^3+\dots\,
\end{equation}
(see our paper~\cite{KhudVo2007a}).
It is a ``function'' of both $z$
and $a \in A$. The coefficients $\psi_k(a)$ of the series~\eqref{charfucnt}  are $\f$-polynomial functions
of $a$ of degree $k$, so that $\psi_k(\lambda a)=\lambda^k
\psi_k(a)$. Indeed, by differentiating equation~\eqref{charfucnt} w.r.t. $z$ we
can see that $\psi_k(a)$ can be obtained by  the Newton type
recurrent formulae:
\begin{multline*}
   \psi_1(a) =\f(a)\,, \quad\\
\psi_{k+1}(a)
=\frac{1}{k+1}\left(\f(a)\psi_k(a)-\f(a^2)\psi_{k-1}(a)+\f(a^3)\psi_{k-2}(a)-\ldots+(-1)^{k+1}\f(a^{k+1})\right)\,.
\end{multline*}
With respect to the linear map $\f$, the characteristic function enjoys an obvious   exponential property:
\begin{equation*}
    R_{\f+\g}(a,z)=R_{\f}(a,z)R_{\g}(a,z)\,.
\end{equation*}

For a given linear map $\f$, its characteristic function
obeys the relations
\begin{equation}\label{eq.ident1}
    R_{\f}(a,z)R_{\f}(a',z')=R_{\f}(az+a'z'+aa'zz',1)\,,
\end{equation}
or
\begin{equation}\label{eq.ident2}
    R_{\f}(a,1)R_{\f}(b,1)=R_{\f}(c,1) \quad {\rm if}\quad 1+c=(1+a)(1+b)
\end{equation}
(making sense as formal power series). This directly follows from the  definition. We shall use these relations  later.

%\section{}

One can make the following formal transformation of the
characteristic function of $\f$ aimed at obtaining its `expansion near
infinity'. More precisely:  $R_{\f}(a,z)$ is defined initially  as a formal power series in $z$;
it can be seen as the Taylor expansion at zero of some genuine function
of the real or complex variable $z$  if such a function exists. Assume that it exists and keep the notation
$R_{\f}(a,z)$ for it. We have, by a formal transformation, that near the infinity in $z$,
\begin{multline*}
    R_{\f}(a,z)=e^{\f\ln(1+az)}=e^{\f\ln\left(az(1+a^{-1}z^{-1})\right)}=
    e^{\f\ln(z1)+\f\ln a+\f\ln(1+a^{-1}z^{-1})}=\\
    e^{(\ln z)\cdot \f(1)}e^{\f\ln
    a}e^{\f\ln(1+a^{-1}z^{-1})}=
    z^{\f(1)}\sum_{k\geq 0}e^{\f\ln a}\psi_k(a^{-1})z^{-k}=\\
    e^{\f\ln a}\,z^{\chi}+e^{\f\ln a}\psi_1(a^{-1})\,z^{\chi-1}+e^{\f\ln
    a}\psi_2(a^{-1})\,z^{\chi-2}+\ldots
\end{multline*}
where we have denoted $\chi=\f(1)$. Here we assume
whatever we may need for the calculation, e.g., that $a^{-1}$ exists, and so on. Initially $\f(1)\in B$; an
assumption that there is a Laurent expansion at infinity forces to
conclude that $\chi$ must be a number in $\ZZ$.  We also observe that the formal expression $e^{\f\ln a}$ arises as the coefficient of the leading term at infinity.

\emph{We are not using this heuristic argument in the next sections; however it may be helpful for understanding our approach.} Instead
of discussing how this formal calculation can be made rigorous, we shall go around it  and apply
arguments more specific for a particular case.

\section{From characteristic function to $n$-homomorphisms}\label{sec.nhom}
Suppose  that the formal power series~\eqref{charfucnt}  terminates, i.e.,
$R_{\f}(a,z)$ is a polynomial function in $z$ for all $a\in A$,  and that
the degree of $R_{\f}(a,z)$ is bounded by some $N\in \NN$ independent of $a$.
\textit{We claim  that in this case $\f(1)=n$ where $n$ is a natural number and that
$R_{\f}(a,z)$ is a polynomial of degree $n$, i.e., the degree of $R_{\f}(a,z)$  is at most
$n$ for all $a$ and   is exactly $n$ for some $a$} (provided some technical assumption for the target algebra $B$).
%The arguments below may be used to replace the formal calculation above.

Indeed, let $\chi=\f(1)\in B$. Consider $R_{\f}(1,z)=\exp[\f(\ln(1+z))]=\exp[\chi\ln(1+z)]$. We  show first that the element
$\chi\in B$ is a natural number. We have $\exp[\chi\ln(1+z)]=(1+z)^\chi$
where $(1+z)^\chi$ is considered is a formal power series:
\begin{equation*}
    (1+z)^\chi=1+\chi z+ {\chi(\chi-1)\over 2}+\dots =
   \sum_{k=0}^\infty {\chi (\chi-1)\dots (\chi-k+1)\over k!}z^k\,.
\end{equation*}
But $R_{\f}(a,z)$ is a polynomial of degree at most $N$.
Hence $\chi (\chi-1)\dots (\chi-k+1)=0$ for all $k>N$.
If in an algebra $B$ the equation $b(b-1)(b-2)\ldots(b-k)=0$  implies that
$b=j$ for some $j=0,1,\ldots,k$,   the
algebra $B$ is called {`connected'} (Buchstaber \& Rees 2008). This is satisfied, for example, if $B$ does not have divisors of zero. Provided such a condition for $B$ holds, we conclude that  $\chi=n$ for
some integer $n$ between $1$ and $N$.

Now we show that the value $n=\f(1)\in \NN$ gives the degree of the polynomial function $R_{\f}(a,z)$. For an arbitrary $a$, we apply the above
identity~\eqref{eq.ident1} to obtain
\begin{equation}\label{eq.lhsrhs}
    R_{\f}(a,z)=R_{\f}(za,1)=R_{\f}(z-1,1)R_{\f}\left({1\over z}+a-1,1\right)=z^nR_{\f}\left({1\over z}+a-1,1\right)
\end{equation}
(note that $R_{\f}(z-1,1)=e^{\f(\ln z\cdot 1)}=e^{n\ln z}=z^n$).  More explicitly, the  RHS of~\eqref{eq.lhsrhs} has the form
\begin{equation}\label{eq.expans}
    z^n\left[1+\f\left(\frac{1}{z}+a-1\right)+\psi_2\left({\frac{1}{z}}+a-1\right) +\dots+
  \psi_N\left({1\over z}+a-1\right)\right]\,.
\end{equation}
Since the functions $\psi_k(a)$ are $\f$-polynomial
of degrees $k$, all terms in the bracket are inhomogeneous polynomials in $z^{-1}$ of degrees $0, 1, \ldots, N$ respectively.
We are given that $R_{\f}(a,z)$ is a polynomial in $z$; it follows that $N\leq n$.
By multiplying in~\eqref{eq.expans}  through and comparing with the expansion  in~\eqref{charfucnt}, we conclude  that the degree of $R_{\f}(a,z)$ in $z$ is at most $n$,
\begin{equation*}
    R_{\f}(a,z)=1+\psi_1(a)z+ \psi_2(a)z^2+\psi_3(a)z^3+\dots
+\psi_n(a)z^n\,,
\end{equation*}
for any $a$. In particular, for $a=1$, we have $R_{\f}(1,z)=(1+z)^n$ where the degree is exactly $n$. This completes the proof of the claim above.

%We can now define   an \emph{$n$-homomorphism}

Let us consider a linear map $\f\co A\to B$   such that its characteristic function $R_{\f}(a,z)$ is a polynomial of degree $n$ in $z$, i.e., it is at most $n$ for all $a$ and it is exactly $n$ for some $a$. As we have found, the integer $n$ is necessarily the value of $\f$ at $1\in A$.

We shall show now that  the class of such maps   coincides with the class of $n$-homomorphisms introduced by Buchstaber and Rees.
Let us   recall their definition. For a given linear map $\f\co A\to B$, Buchstaber and Rees defined  maps
\begin{equation*}
    \Phi_k\co A\times \ldots \times A \to B\,,
\end{equation*}
for all $k=1,2, \ldots\,$,  by a ``Frobenius recursion  formula'':  $\Phi_1=\f$, and
\begin{multline*}%\label{eq.deffrob}
    \Phi_{k+1}(a_1,\ldots,a_{k+1})=\f(a_1)\Phi_k(a_2,\ldots,a_{k+1})
    -\Phi_k(a_1a_2,\ldots,a_{k+1})-\ldots
    -\Phi_k(a_2,\ldots,a_1a_{k+1})\,.
\end{multline*}

A  linear map $\f\co A\to B$ is called a \textit{(Frobenius) $n$-homomorphism}
if $\f(1)=n$ and $\Phi_k  =0$ for all $k\geq n+1\,$ (\cite{BuRees1997, BuRees2002}).

\begin{proposition} The class of the linear maps $\f\co A\to B$   such that their characteristic functions $R_{\f}(a,z)$ are polynomials of degree $n$ in $z$ coincides with the class of the Buchstaber--Rees Frobenius $n$-homomorphisms.
\end{proposition}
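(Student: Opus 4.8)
My plan is to build an explicit dictionary between the coefficients $\psi_k$ of the characteristic function and the Buchstaber--Rees maps $\Phi_k$, and then to read off the equivalence from the vanishing of these objects. The bridge is \emph{polarization}. For fixed $a_1,\dots,a_k\in A$ introduce formal variables $t_1,\dots,t_k$ and define $\widetilde{\Phi}_k(a_1,\dots,a_k)$ to be the coefficient of $t_1\cdots t_k$ in the formal series $R_{\f}(a_1t_1+\dots+a_kt_k,1)$. Being a mixed $t$-derivative at the origin, this is symmetric and multilinear in its arguments; and setting all $a_i=a$ gives $R_{\f}(a,t_1+\dots+t_k)=\sum_m\psi_m(a)(t_1+\dots+t_k)^m$, whose $t_1\cdots t_k$-coefficient is $k!\,\psi_k(a)$, so that $\widetilde{\Phi}_k(a,\dots,a)=k!\,\psi_k(a)$. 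Since we work over $\RR$ or $\CC$, a symmetric multilinear map is recovered from its diagonal, whence $\widetilde{\Phi}_k\equiv 0$ if and only if $\psi_k\equiv 0$.

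The key step is to identify $\widetilde{\Phi}_k$ with $\Phi_k$, which I would do by induction, verifying that $\widetilde{\Phi}_k$ obeys the same Frobenius recursion. The base case $\widetilde{\Phi}_1=\f=\Phi_1$ is immediate. For the inductive step I apply the multiplicative identity~\eqref{eq.ident2} with $1+c=(1+a_1t_1)\bigl(1+a_2t_2+\dots+a_{k+1}t_{k+1}\bigr)$, so that $R_{\f}(c,1)=R_{\f}(a_1t_1,1)\,R_{\f}\bigl(a_2t_2+\dots+a_{k+1}t_{k+1},1\bigr)$. Reading off the coefficient of $t_1\cdots t_{k+1}$ from this product form and using $R_{\f}(a_1t_1,1)\equiv 1+\f(a_1)t_1\pmod{t_1^2}$ gives $\f(a_1)\,\widetilde{\Phi}_k(a_2,\dots,a_{k+1})$.

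The main obstacle is to compute the same coefficient a second way, directly from $c$, and match it to the right-hand side of the Frobenius recursion. Writing $c=c_0+\delta$ with $c_0=a_1t_1+\dots+a_{k+1}t_{k+1}$ and $\delta=t_1\sum_{i\geq 2}(a_1a_i)t_i$ (the cross terms of the product), I expand $R_{\f}(c,1)=R_{\f}(c_0,1)\bigl(1+\f(\delta(1+c_0)^{-1})\bigr)$ to first order in $\delta$, which suffices because $\delta^2$ carries $t_1^2$; here I use the directional derivative $D_\eta R_{\f}(x,1)=R_{\f}(x,1)\,\f\bigl(\eta(1+x)^{-1}\bigr)$, immediate from $R_{\f}=\exp(\f\ln(1+\,\cdot\,))$. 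The term $R_{\f}(c_0,1)$ contributes $\widetilde{\Phi}_{k+1}(a_1,\dots,a_{k+1})$, while the correction, after noting that the explicit $t_1$ in $\delta$ forces $t_1=0$ in the remaining factors, reorganizes into $\sum_{i=2}^{k+1}\widetilde{\Phi}_k(a_2,\dots,a_1a_i,\dots,a_{k+1})$ — precisely the off-diagonal sum of the Frobenius recursion, the products $a_1a_i$ being produced by $\delta$. Equating the two evaluations yields the recursion for $\widetilde{\Phi}$, hence $\widetilde{\Phi}_k=\Phi_k$ for all $k$. This bookkeeping is the delicate point, and where I expect to spend most of the effort.

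With the dictionary in hand the proposition follows. The series $R_{\f}(a,z)$ has degree at most $n$ for every $a$ precisely when $\psi_k\equiv 0$ for all $k>n$, equivalently (by the diagonal relation and polarization) when $\Phi_k\equiv 0$ for all $k>n$; since the Frobenius recursion expresses $\Phi_{k+1}$ through $\Phi_k$, the vanishing propagates upward and the whole family of conditions reduces to $\Phi_{n+1}=0$. Finally $\f(1)=n$ has already been established, and $R_{\f}(1,z)=(1+z)^{\f(1)}=(1+z)^n$ has degree exactly $n$, so ``characteristic function of degree $n$'' and ``Frobenius $n$-homomorphism'' (namely $\f(1)=n$ together with $\Phi_{n+1}=0$) describe the same class of maps.
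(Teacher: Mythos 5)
Your proposal is correct, and it establishes the same central dictionary as the paper---$\Phi_k(a,\dots,a)=k!\,\psi_k(a)$ plus polarization, followed by the same endgame (the value $\f(1)=n$ taken from the claim proved before the proposition, and $R_{\f}(1,z)=(1+z)^n$ to pin the degree at exactly $n$)---but it proves that dictionary by a genuinely different computation. The paper works from the Frobenius side: it shows by induction that the maps $\Phi_k$ are symmetric, restricts them to the diagonal, checks that the diagonal functions $\varphi_k(a)=\Phi_k(a,\dots,a)$ satisfy the same Newton-type recurrence as the functions $k!\,\psi_k(a)$, and concludes $\varphi_k=k!\,\psi_k$; polarization then transfers the vanishing conditions. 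You work from the characteristic-function side: you \emph{define} $\widetilde\Phi_k$ as the polarization (the $t_1\cdots t_k$-coefficient of $R_{\f}(a_1t_1+\dots+a_kt_k,1)$), which makes symmetry and multilinearity automatic, and then derive the full multilinear Frobenius recursion for $\widetilde\Phi_k$ from the functional equation~\eqref{eq.ident2}. Your delicate step does close as planned: the cross term $\delta=t_1\sum_{i\geq 2}(a_1a_i)t_i$ contributes, after setting $t_1=0$ (and, for each $i$, $t_i=0$) in the accompanying factors, exactly $\sum_{i=2}^{k+1}\widetilde\Phi_k(a_2,\dots,a_1a_i,\dots,a_{k+1})$, and equating with the coefficient $\f(a_1)\widetilde\Phi_k(a_2,\dots,a_{k+1})$ read off the product side yields precisely the Frobenius recursion, so induction gives $\widetilde\Phi_k=\Phi_k$. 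What your route buys is conceptual: it exhibits the Frobenius recursion itself as the coefficient-level shadow of the multiplicativity $R_{\f}(a,1)R_{\f}(b,1)=R_{\f}(a+b+ab,1)$, and it spares both the separate induction proving symmetry of $\Phi_k$ and the comparison of two scalar recurrences. What it costs is that the first-order expansion in $\delta$ and the directional-derivative formula must be justified as formal power series identities in $t_1,\dots,t_{k+1}$ (harmless, since all series involved lie in the ideal generated by the $t_i$), whereas the paper's diagonal comparison needs only the one-variable Newton recurrences already available from section~\ref{sec.charfun}.
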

\begin{proof}
From the  recursion formula, it is easy to show by induction
that the multilinear maps $\Phi_k$ are symmetric. Therefore they are
defined by the restrictions to the diagonal. Using
induction again, one deduces that the functions
$\varphi_k(a)=\Phi_k(a,\ldots,a)$ obey Newton-type recurrence
relations similar to those satisfied by our functions $\psi_k(a)$ defined as the coefficients of the expansion~\eqref{charfucnt}. From here one can establish that $\Phi_k(a,\ldots,a)=k!\psi_k(a)$, so $\Phi_k(a_1,\ldots,a_k)$
can be recovered from $\psi_k(a)$ by  polarization (see remark below). Therefore the identical vanishing of the Frobenius maps $\Phi_k$ for all $k\geq n+1\,$ is equivalent to the characteristic function $R_{\f}(a,z)$ being a polynomial of degree $\leq n$. Suppose the characteristic function $R_{\f}(a,z)$ is a polynomial of degree  $n$, i.e., of degree $\leq n$ for all $a$ exactly $n$ for some $a$. Then $\f(1)=\chi$ is an integer between $1$ and $n$; if it is less than $n$, then the degree of $R_{\f}(a,z)$ is less than $n$ for all $a$ as shown above. Hence $\f(1)=n$. Conversely, if $\f(1)=n$ , then $R_{\f}(1,z)=(1+z)^n$ and $R_{\f}(a,z)$ is a polynomial  of degree $n$ as claimed.
\end{proof}

Therefore we can identify the   $n$-homomorphisms as defined in \cite{BuRees1997}, \cite{BuRees2002}   with the linear maps such  that their characteristic functions are polynomials of degree $n$.

\begin{remark} Here is a formula for the polarization of a homogeneous polynomial
of degree $k$ (the restriction of a symmetric $k$-linear function to
the diagonal):
\begin{equation}\label{eq.polar}
    \Phi_k(a_1,a_2,\dots,a_k)=
              \sum_{r=1}^k\sum_{1\leq i_1<i_2<\dots<i_r\leq
k}(-1)^{k+r}
           \psi_k(a_{i_1}+a_{i_2}+\dots+a_{i_r})\,.
\end{equation}
Here $\Phi_k(a,\ldots,a)=k!\psi_k(a)$. For example,
             $$
              \Phi_2(a_1,a_2)=\psi_2(a_1+a_2)-\psi_2(a_1)-\psi_2(a_2)
             $$
and
\begin{multline*}
    \Phi_3(a_1,a_2,a_3)=\psi_3(a_1+a_2+a_3)-\psi_3(a_1+a_2)-\psi_3(a_1+a_3)-\psi_3(a_2+a_3)\\
    +
  \psi_3(a_1)+\psi_3(a_2)+\psi_3(a_3)\,.
\end{multline*}
The relation between homogeneous polynomial functions and symmetric multilinear functions is standard, but explicit formulas are not easy to find in the literature.
\end{remark}

\begin{remark} \label{rem.genfun} It may be noted that the `ghost' of our characteristic function $R_{\f}(a,z)$ (to which we came motivated by the study of Berezinians in~\cite{KhudVo2005}) did appear  in relation with $n$-homomorphisms but was never recognized. We can see these instances in hindsight. The  initial definition of an ``$n$-ring homomorphism'' in  \cite{BuRees1996} (with  a bit different normalization from that adopted later) used  a certain monic polynomial $p(a,t)=t^n-\ldots\,$ of degree $n$.  That definition was quickly abandoned  starting from \cite{BuRees1997} in  favour of the Frobenius recursion. In  hindsight one can relate the Buchstaber and Rees polynomial $p(a,t)$  with our characteristic function when it is a polynomial of a given degree $n$, by the formula  $p(a,t)=t^nR_{\f}(a,-\frac{1}{t})\,$. In \cite[Corollary 2.12]{BuRees2002},  the maps $\Phi_k(a,\ldots,a)$ for an $n$-homomorphism were assembled into a certain generating function, which was then re-written in the exponential form. The resulting power series can be identified with our characteristic function.  That series was  used only in the proof of their Theorem~2.9  about the sum of   $n$- and    $m$-homomorphisms, while the  central Theorem 2.8 concerning the relation of the Frobenius $n$-homomorphisms  with the algebra homomorphisms of the symmetric powers, was obtained in \cite{BuRees2002} by a long   combinatorial argument.
\end{remark}

\section{Berezinian and a proof of the key statement}

Let $\f\co A\to B$ be an arbitrary linear map of algebras. We define formally the \textit{$\f$-Berezinian} on $A$ as a map $\ber_{\f}\co
A\to B$ by the formula
\begin{equation*}
    \ber_{\f}(a):=\exp \f(\ln a)=R_{\f}(a-1,1)
\end{equation*}
when
it makes sense. (Compare the Liouville formulas for   matrices, $\det  X=\exp {\tr\ln X}$ and $\Ber X=\exp{\str \ln X}$, with  the ordinary and super determinants, respectively.) Note that the $\f$-Berezinian $\ber_{\f}(a)=e^{\f \ln a}$ appeared in the heuristic calculation in section~\ref{sec.charfun} above  as the leading coefficient    of the characteristic function  $R_{\f}(a,z)$ ``at infinity''.

\begin{theorem}\label{prop.bermult}
{The function $\ber_{\f}$
is multiplicative:}
\begin{equation*}
    \ber_{\f}(a_1a_2)=\ber_{\f}(a_1)\ber_{\f}(a_2)\,
\end{equation*}
(whenever both sides make sense).
\end{theorem}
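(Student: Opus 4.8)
The plan is to read off multiplicativity directly from the identity~\eqref{eq.ident2} already recorded for the characteristic function, using the defining relation $\ber_{\f}(a)=R_{\f}(a-1,1)$. With this in hand the statement becomes a one-line substitution, so essentially all of the content is in explaining on what domain the symbols make sense.

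First I would recall that $\ber_{\f}(a)=\exp\f(\ln a)=R_{\f}(a-1,1)$, and then apply~\eqref{eq.ident2} with $a$ replaced by $a_1-1$ and $b$ replaced by $a_2-1$. The side condition $1+c=(1+a)(1+b)$ then reads $1+c=a_1a_2$, i.e.\ $c=a_1a_2-1$, and~\eqref{eq.ident2} becomes
\begin{equation*}
    R_{\f}(a_1-1,1)\,R_{\f}(a_2-1,1)=R_{\f}(a_1a_2-1,1)\,.
\end{equation*}
Translating each side back through the definition of the $\f$-Berezinian, the left-hand side is $\ber_{\f}(a_1)\ber_{\f}(a_2)$ and the right-hand side is $\ber_{\f}(a_1a_2)$, which is exactly the assertion. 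An equivalent and even more transparent route is the direct chain
\begin{equation*}
    \ber_{\f}(a_1)\ber_{\f}(a_2)=e^{\f(\ln a_1)}e^{\f(\ln a_2)}=e^{\f(\ln a_1)+\f(\ln a_2)}=e^{\f(\ln a_1+\ln a_2)}=e^{\f(\ln(a_1a_2))}=\ber_{\f}(a_1a_2)\,,
\end{equation*}
which uses, in turn, commutativity of $B$ (so the two exponentials combine), linearity of $\f$, and the logarithm addition law $\ln a_1+\ln a_2=\ln(a_1a_2)$. Since this last law is precisely what underlies~\eqref{eq.ident2}, the two arguments coincide at heart, and I would likely present the second as the cleaner version.

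The only genuine obstacle is that none of these equalities is literally an identity of elements of $B$ until $\ln a$ and the outer $\exp$ are assigned a meaning; the algebraic manipulation itself is trivial. I would dispatch this exactly as~\eqref{eq.ident2} is handled in Section~\ref{sec.charfun}: interpret $\ln a=\ln(1+(a-1))$ and the exponential as formal power series, so that $\ber_{\f}(a)$ is a well-defined element of a suitable completion whenever $a-1$ is treated as a formal (or nilpotent, or small) increment, and in that setting the addition law for the logarithm holds at the level of formal series. Thus I expect the finished proof to be one substitution into~\eqref{eq.ident2} (or the short $\exp$/$\ln$ computation) together with a single sentence pinning down the domain on which both sides are defined, with no real computational difficulty beyond this bookkeeping.
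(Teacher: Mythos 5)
Your proposal is correct and, in its second form (the direct chain $e^{\f(\ln a_1)}e^{\f(\ln a_2)}=e^{\f(\ln a_1+\ln a_2)}=e^{\f(\ln(a_1a_2))}$ using linearity of $\f$ and the logarithm addition law), is essentially identical to the paper's own proof; the substitution into~\eqref{eq.ident2} is just a repackaging of the same computation, as you yourself observe. Your closing remarks on interpreting $\ln$ and $\exp$ as formal series match the paper's caveat ``whenever both sides make sense,'' so nothing is missing.
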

\begin{proof} By the definition:
\begin{equation*}
    \exp \f(\ln (a_1a_2))=\exp\f\left(\ln a_1+\ln a_2\right)=\exp\left(\f\ln
a_1+\f\ln a_2\right)=\exp \f(\ln a_1) \exp \f(\ln a_2)\,.
\end{equation*}
\end{proof}
This holds even if the algebra $B$ is non-commutative, but $\f$ is a `trace', i.e., $\f(a_1a_2)=\f(a_2a_1)$.

Let $\f$ be an $n$-homomorphism. Then
$\ber_{\f}(a)=\exp \f(\ln a)$ is an $\f$-polynomial function of $a$ with
values in $B$:
\begin{equation*}
    \ber_{\f}(a)=\exp \f(\ln
(1+(a-1)))=R_{\f}(a-1,1)=\\
1+\f(a-1)+\psi_2(a-1)+\dots+\psi_n(a-1)\,,
\end{equation*}
and is well-defined for all $a$.
\begin{proposition} \label{prop.berpsin}
For an $n$-homomorphism,
\begin{equation*}
    \ber_{\f}(a)=\psi_n(a)\,.
\end{equation*}
\end{proposition}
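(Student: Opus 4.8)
The plan is to identify $\ber_{\f}(a)$ with the leading (top-degree) coefficient of the characteristic polynomial $R_{\f}(a,z)$ in the variable $z$, which by the expansion \eqref{charfucnt} is precisely $\psi_n(a)$. This turns the heuristic ``expansion at infinity'' of Section~\ref{sec.charfun} — according to which $\ber_{\f}(a)=e^{\f\ln a}$ appears as the coefficient of the leading power of $z$ — into a rigorous one-line comparison.

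First I would invoke the identity \eqref{eq.lhsrhs}, which for an $n$-homomorphism reads
\[
    R_{\f}(a,z)=z^n\,R_{\f}\!\left(\tfrac{1}{z}+a-1,\,1\right),
\]
and which follows from \eqref{eq.ident2} together with $\f(1)=n$. Writing $w=z^{-1}$, I would observe that $R_{\f}(a-1+w,1)=1+\psi_1(a-1+w)+\dots+\psi_n(a-1+w)$ is a genuine polynomial in $w$: each $\psi_k(a-1+w)$ is $\f$-polynomial of degree $k$, hence a polynomial in $w$ of degree at most $k$, since $(a-1+w)^j=\sum_i\binom{j}{i}w^i(a-1)^{j-i}$ makes each $\f((a-1+w)^j)$ polynomial in $w$.

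Next I would compare the coefficient of $z^n$ on the two sides. On the left, the expansion \eqref{charfucnt} gives this coefficient as $\psi_n(a)$. On the right, multiplying the polynomial $R_{\f}(a-1+w,1)$ in $w=z^{-1}$ by $z^n$ turns its constant term (its value at $w=0$) into the coefficient of $z^n$; that value is $R_{\f}(a-1,1)=\ber_{\f}(a)$. Equating the two yields $\psi_n(a)=\ber_{\f}(a)$, as required. Since both sides are $\f$-polynomial functions defined for all $a$ (as noted just before the statement), this is an identity of functions on all of $A$.

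The only point requiring care — the main obstacle, such as it is — is the bookkeeping that makes the informal expansion at infinity precise: one must check that $R_{\f}(a-1+w,1)$ is honestly a polynomial in $w$ and that its value at $w=0$ is exactly the top coefficient selected after multiplication by $z^n$. Both facts rest on $R_{\f}(a,z)$ being a polynomial of degree $n$ (so that no higher powers of $w$ intrude) and on the $\f$-polynomiality of the $\psi_k$; granting these, the proof reduces to the coefficient comparison above.
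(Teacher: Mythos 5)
Your proof is correct and is essentially the paper's own argument: both start from the identity \eqref{eq.lhsrhs} with $N=n$ and compare the coefficient of $z^n$ on the two sides, identifying it on the left with $\psi_n(a)$ via \eqref{charfucnt} and on the right with the constant term in $z^{-1}$ of the bracket, namely $R_{\f}(a-1,1)=\ber_{\f}(a)$. Your explicit bookkeeping in the variable $w=z^{-1}$ merely spells out what the paper's phrase ``collecting all the terms of degree $n$ in $z$'' does implicitly, so there is no substantive difference between the two arguments.
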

\begin{proof}
Consider  the equality
\begin{multline*}
    1+\psi_1(a)z+ \psi_2(a)z^2+\psi_3(a)z^3+\dots
+\psi_n(a)z^n=\\
z^n\left[1+\f\left({1\over z}+a-1\right)+\psi_2\left({1\over
z}+a-1\right) +\dots+
  \psi_n\left({1\over z}+a-1\right)\right]
\end{multline*}
(compare formulas~\eqref{eq.lhsrhs}, \eqref{eq.expans};  we have legitimately set $N=n$).
Collecting all the terms of degree $n$ in $z$, we arrive at the identity
\begin{equation*}
    \psi_n(a)=1+\f\left(a-1\right)+\psi_2\left(a-1\right) +\dots+
  \psi_n\left(a-1\right)\,,
\end{equation*}
where the RHS is by the definition $\ber_{\f}(a)$.
\end{proof}

This proposition is the crucial step in our proof of the Buchstaber--Rees theorem.

\begin{corollary} \label{coro.multpsi} For an  $n$-homomorphism $\f$, the function
$\psi_n(a)$ is multiplicative in $a$.
\end{corollary}

The multiplicativity of the  function $\psi_n(a)$ for $n$-homomorphisms is the central fact in the Buchstaber--Rees theorem. Establishing this fact was the main difficulty of the proof in~\cite{BuRees2002},  where it was deduced by complicated combinatorial  arguments. In our approach this fact comes about almost without effort.

\begin{remark}
The apparatus of characteristic functions allows to obtain easily
many other facts. For example, if $\f$ and $\g$ are $n$- and
$m$-homomorphisms $A\to B$, respectively, then the exponential
property of characteristic functions immediately implies that
$\f+\g$ is an $(n+m)$-homomorphism, since its characteristic
function is the product of polynomials of degrees $\leq n$ and $\leq
m$. If $\g$ is an $m$-homomorphism $A\to B$ and $\f$ is an
$n$-homomorphism $B\to C$, then $R_{\f\circ
\g}(a,z)=e^{\f\g\ln(1+az)}=e^{\f\ln R_{\g}(a,z)}=\ber_{\f} R_{\g}(a,z)$.
Since we know that $R_{\g}(a,z)$ is a polynomial in $z$ of degree at
most $m$, and the $\f$-Berezinian $\ber_{\f} b$ is a polynomial in
$b\in B$ of degree $n$, we conclude that $R_{\f\circ \g}(a,z)$ has
degree at most $nm$ in $z$, therefore $\f\circ \g$ is an
$nm$-homomorphism. (The first statement  was  established in~\cite{BuRees2002} by a similar argument,  see Remark~\ref{rem.genfun}, while the second statement was obtained in paper~\cite{BuRees2008} in a much harder way  as a corollary of the  main theorem.)
\end{remark}

\section{A completion of the proof}

Let us formulate the main theorem of Buchstaber and Rees.

\begin{theorem}[\cite{BuRees2002}]
There is a one-to-one
correspondence between the $n$-homomor\-phisms   $A\to B$ and the
algebra homomorphisms $S^nA\to B$. The algebra homomorphism $F\co S^nA\to B$ corresponding to an $n$-homomor\-phism    $\f\co A\to B$ is given by the formula
\begin{equation}\label{eq.br}
    F(a_1, \ldots,  a_n)=\frac{1}{n!}\Phi_n(a_1,\ldots, a_n)\,,
\end{equation}
where in the left-hand side a linear map from $S^nA$ is written as a symmetric multilinear function.
\end{theorem}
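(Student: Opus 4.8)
The plan is to build the correspondence in both directions and to reduce its essential content to the multiplicativity already established in Corollary~\ref{coro.multpsi}. Throughout I write $a^{\otimes n}=a\otimes\dots\otimes a\in S^nA$ for the diagonal symmetric tensor, and I use two standard facts (both valid over a field of characteristic zero): the diagonal tensors $a^{\otimes n}$ span $S^nA$, and the product induced from $A^{\otimes n}$ satisfies $a^{\otimes n}\cdot b^{\otimes n}=(ab)^{\otimes n}$. Denote by $\alpha$ the forward construction $\f\mapsto F$ and by $\beta$ the inverse $F\mapsto\f$ to be defined.

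\emph{The forward map.} Given an $n$-homomorphism $\f$, the maps $\Phi_k$ are symmetric (shown earlier), so $\frac{1}{n!}\Phi_n$ is a symmetric $n$-linear map $A\times\dots\times A\to B$ and, by the universal property of $S^nA$, defines a unique linear $F=\alpha(\f)\co S^nA\to B$ obeying~\eqref{eq.br}; on the diagonal $F(a^{\otimes n})=\frac{1}{n!}\Phi_n(a,\dots,a)=\psi_n(a)$. I would then check that $F$ is a unital algebra homomorphism. It sends the unit $1^{\otimes n}$ of $S^nA$ to $\psi_n(1)=1$, the top coefficient of $R_{\f}(1,z)=(1+z)^n$. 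It is multiplicative because both $(u,v)\mapsto F(uv)$ and $(u,v)\mapsto F(u)F(v)$ are bilinear on $S^nA\times S^nA$, so it suffices to compare them on the spanning diagonal tensors, where $F(a^{\otimes n}\cdot b^{\otimes n})=F((ab)^{\otimes n})=\psi_n(ab)$ equals $\psi_n(a)\psi_n(b)=F(a^{\otimes n})F(b^{\otimes n})$ by Corollary~\ref{coro.multpsi}. This is the step where the whole apparatus pays off.

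\emph{The inverse map and injectivity.} Conversely, to an algebra homomorphism $F\co S^nA\to B$ I associate the linear map $\f=\beta(F)\co A\to B$, $\f(a)=n\,F(a,1,\dots,1)$, that is, $F$ precomposed with the (symmetrized) map $a\mapsto a\otimes1\otimes\dots\otimes1$ and scaled by $n$. Specializing the Frobenius recursion at the first argument $1$ gives $\Phi_{k+1}(1,a_2,\dots,a_{k+1})=(\f(1)-k)\Phi_k(a_2,\dots,a_{k+1})$, whence for an $n$-homomorphism $\Phi_n(a,1,\dots,1)=(n-1)!\,\f(a)$ by peeling off the ones one at a time. Therefore $\beta(\alpha(\f))(a)=n\cdot\frac{1}{n!}\Phi_n(a,1,\dots,1)=\f(a)$, so $\beta\circ\alpha=\mathrm{id}$; in particular the forward map is injective.

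\emph{Surjectivity --- the crux.} It remains to show every algebra homomorphism $F$ lies in the image of $\alpha$. Put $g(a)=F(a^{\otimes n})$; since $(ab)^{\otimes n}=a^{\otimes n}b^{\otimes n}$ and $F$ is multiplicative, $g$ is a \emph{multiplicative} homogeneous function of degree $n$ with $g(1)=1$, and the linearization of $g$ at $1$ in any direction $b$ is $n\,F(b,1,\dots,1)=\f(b)$, where $\f=\beta(F)$. The key identity I would prove is $g(a)=\ber_{\f}(a)$, equivalently $g(1+az)=R_{\f}(a,z)$. Granting it, $R_{\f}(a,z)=g(1+az)$ is a polynomial in $z$ of degree $n$, so by Section~\ref{sec.nhom} the map $\f$ is an $n$-homomorphism (with $\f(1)=n\,F(1^{\otimes n})=n$), and by Proposition~\ref{prop.berpsin} one has $\psi_n^{\f}=\ber_{\f}=g$; then $\alpha(\f)$ and $F$ agree on the spanning diagonal tensors $a^{\otimes n}$ and hence coincide, proving $\alpha\circ\beta=\mathrm{id}$ and completing the bijection. \emph{The main obstacle} is precisely this identity $g=\ber_{\f}$. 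Multiplicativity of $g$ is exactly the functional equation~\eqref{eq.ident2} of a characteristic function, and the two series $g(1+az)$ and $R_{\f}(a,z)$ already share the linear coefficient $\f(a)$ by the definition of $\beta$. Comparing the two sides of $g\bigl((1+az)(1+az')\bigr)=g(1+az)\,g(1+az')$ order by order, and using that the linearization of $g$ in the direction $a^{j}$ equals $\f(a^{j})$ for every $j$, should force the coefficient of $z^{k}$ in $\log g(1+az)$ to equal $\frac{(-1)^{k+1}}{k}\f(a^{k})$, thereby reproducing $\f\ln(1+az)$. This short induction, which recovers the Newton recursion behind~\eqref{charfucnt} from multiplicativity alone, is the only place where a genuine computation is needed.
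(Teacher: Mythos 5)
Your architecture is sound, and two of its three parts are complete and correct. The forward map $\alpha$ is exactly the paper's construction of $F$ from $\f$ (existence via $\frac{1}{n!}\Phi_n$, multiplicativity by combining Corollary~\ref{coro.multpsi} with bilinearity and the spanning of $S^nA$ by the diagonal tensors), and your proof that $\beta\circ\alpha=\mathrm{id}$, via the specialization $\Phi_{k+1}(1,a_2,\ldots,a_{k+1})=(\f(1)-k)\Phi_k(a_2,\ldots,a_{k+1})$ of the Frobenius recursion, is a correct direct verification that the paper does not even need: there both directions are characterized by the single equation~\eqref{eq.det}, so the two maps are mutually inverse automatically. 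The divergence, and the problem, is in the surjectivity step, precisely where you flag ``the main obstacle.''

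The identity $g(1+az)=R_{\f}(a,z)$, with $g(a)=F(a\otimes\cdots\otimes a)$ and $\f=\beta(F)$, is not established by your sketch. Comparing $g\bigl((1+az)(1+az')\bigr)=g(1+az)\,g(1+az')$ ``order by order'' using only the linearizations $\f(a^j)$ does not close: writing $G=\log g$, the coefficient of $zz'$ gives $G_2[a,a]=-\f(a^2)$ (fine), but already the coefficient of $z^2z'$ reads $G_2[a,a^2]+\tfrac12 G_3[a,a,a]=0$, where $G_k$ denotes the $k$-th Taylor coefficient of $G$ at $1$; the new unknown $G_2[a,a^2]$ is \emph{not} among your data $\f(a^j)$, and each further order introduces more such unknowns, so nothing is ``forced.'' (The argument can be repaired --- e.g.\ by running the equation for all $b\in A$ and polarizing $G_2[b,b]=-\f(b^2)$ to get $G_2[a,a^2]=-\f(a^3)$, and so on inductively, or by differentiating multiplicativity to get $\partial_z\log g(1+az)=\f\bigl(a(1+az)^{-1}\bigr)$ and integrating --- but no such argument appears in your proposal.) The far simpler repair is the paper's own, and it is the one real idea your proposal is missing: you are entitled to use the \emph{full} algebra-homomorphism property of $F$ on $S^nA$, not merely its multiplicativity on diagonal tensors. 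Extending $F$ coefficient-wise to $S^nA[[z]]\to B[[z]]$, a homomorphism commutes with $\exp$ and $\ln$ of formal power series, so
\begin{multline*}
g(1+az)=F\bigl((1+az)\otimes\cdots\otimes(1+az)\bigr)=F\bigl(\exp\tr\ln(1+\L(a)z)\bigr)\\
=\exp F\bigl(\tr\L(\ln(1+az))\bigr)=\exp\f\bigl(\ln(1+az)\bigr)=R_{\f}(a,z)\,,
\end{multline*}
since $(1+az)\otimes\cdots\otimes(1+az)=\det(1+\L(a)z)$ and $\f=F\circ\tr\L$ by your definition of $\beta$. This is precisely the paper's boxed equation~\eqref{eq.det}, proved there by this two-line computation; once you have it, the rest of your surjectivity argument (polynomiality of degree $n$, hence $\f\in{\alg}^n(A,B)$ by section~\ref{sec.nhom}; then Proposition~\ref{prop.berpsin} and agreement on the spanning diagonal tensors) goes through verbatim.
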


Here $\Phi_n(a_1,\ldots, a_n)$ is the top non-vanishing term of the Frobenius recursion for $\f$.

Basing on the key result established in the previous section (Corollary~\ref{coro.multpsi}), we can complete the proof of this theorem as follows.

%One can obtain this result in our framework as follows.

Let ${\alg}^n(A,B)$ be the set of all $n$-homomorphisms from an
algebra $A$ to an algebra $B$.  We shall construct two mutually
inverse maps between the spaces ${\alg}^n(A,B)$ and ${\alg}^1(S^nA,B)$, thus establishing their one-to-one correspondence:
\begin{equation*}
    {\alg}^1(S^nA,B)\underset{\beta}{\overset{\a}{\rightleftarrows}} {\alg}^n(A,B)\,.
\end{equation*}
To every  algebra homomorphism $F\in {\alg}^1(S^nA,B)$ we shall assign
an $n$-homomorphism $\a(F)\in {\alg}^n(A,B)$, and to every $n$-homomorphism $\f\in {\alg}^n(A,B)$ we shall assign an algebra homomorphism $\b(\f)\in {\alg}^1(S^nA,B)$, in the following way.

It
is convenient to introduce an   $n\times n$ matrix $\L(a)$ with
entries in the algebra $A^{\otimes n}=A\otimes \ldots \otimes A$, where
\begin{equation*}
    \L(a)={\rm diag}\,\left[a\otimes 1\otimes \dots \otimes 1,\,
                   1\otimes a\otimes 1\otimes\dots\otimes 1,\dots,
                        1\otimes 1\otimes\dots\otimes 1\otimes a
                        \right]\,.
\end{equation*}
The map $a\mapsto \L(a)$ is a matrix representation $A\to \Mat(n,A^{\otimes n})$. Consider an equation:
\begin{equation}\label{eq.det}
   \boxed{F\bigl(\det (1+ \L(a)z)\bigr)=R_{\f}(a,z)\,. }
\end{equation}
The coefficients of the determinant in the left-hand side take  values \emph{a priori} in the algebra $A^{\otimes n}$, but they actually belong to the subalgebra $S^nA$.
Let~\eqref{eq.det}   hold identically in $z$. We shall show that, for a given $F\,$, equation~\eqref{eq.det} uniquely defines $\f$ so that we can set $\a(F):=\f\,$, and conversely, for a given $\f$, equation~\eqref{eq.det} uniquely defines $F$ so that we can set $\b(\f):=F\,$. (Then the maps $\a$ and $\b$ will automatically be mutually inverse.)

Suppose $F\in {\alg}^1(S^nA,B)$ is given. Then, by comparing the linear terms in~\eqref{eq.det}, we see that $\f$ should be given by the formula
\begin{equation} \label{eq.trace}
    \f(a)=F(\tr \L(a))\,.
\end{equation}
We have $\tr\L(a)=a\otimes 1\otimes \dots \otimes 1 +  \ldots +    1\otimes 1\otimes\dots\otimes a\in S^nA$.
\begin{remark} The element $a\otimes 1\otimes \dots \otimes 1 +  \ldots +    1\otimes 1\otimes\dots\otimes a$ appears in Buchstaber \& Rees (2008) where it is denoted $\Delta (a)\,$.
\end{remark}
Take~\eqref{eq.trace} as the definition of $\f$.  Evidently,  it is a linear map $A\to B\,$. Calculate its characteristic function. We have
\begin{multline*}
    R_{\f}(a,z)=e^{\f\ln(1+az)}=e^{F(\tr \L(\ln(1+az)))}=e^{F(\tr \ln(1+\L(a)z))}%\\ %\intertext{because $\L$ is a matrix  representation}
    = F\!\left(e^{\tr \ln(1+\L(a)z)}\right) \\%\\ %\intertext{because $F$ is an algebra homomorphism}
    =F\left(\det (1+\L(a)z)\right)\,
\end{multline*}
(note that both $\L$ and $F$ can be swapped with functions, as we did in the calculation above, because they are algebra homomorphisms).  So equation~\eqref{eq.det} is indeed satisfied. In particular, the characteristic function of $\f$ is a polynomial of degree $n$. Hence $\f$ is an $n$-homomorphism $A\to B$. We have constructed the desired map $\a\co {\alg}^1(S^nA,B)\to  {\alg}^n(A,B)\,$.

Suppose now $\f\in {\alg}^n(A,B)$ is given. We wish to define $F$ from~\eqref{eq.det}.  We need to show  the existence of a linear map $F\co S^nA\to B$ and that it is an algebra homomorphism. Assume that a linear map $F$ satisfying~\eqref{eq.det} exists. We shall deduce its uniqueness.  By developing the left-hand side of~\eqref{eq.det}, we see that this equation specifies $F$ on all the elements of $S^nA$ of the form $\tr\wed^k\L(a)\in S^nA$,
including   $\det \L(a)$. In particular, we should have
\begin{equation}\label{eq.det2}
    F(\det\L(a))=\ber_{\f}(a)\,.%\,.=\psi_n(a)\,.
\end{equation}
Take~\eqref{eq.det2} as the definition of $F$ on such elements.
Note that the elements of the form~$\det\L(a)=a\otimes\ldots\otimes a\in S^nA$ linearly span $S^nA$, so if a linear map $F\co S^nA\to B$ satisfying~\eqref{eq.det2} exists, this formula determines it uniquely. Moreover, by replacing $a$ by $1+az$ in~\eqref{eq.det2}, we see that equation~\eqref{eq.det} will be automatically satisfied. For the existence, observe that $\ber_{\f}(a)=\psi_n(a)$ and $\psi_n$ is the restriction on the diagonal of the symmetric multilinear map $\frac{1}{n!}\Phi\co A\times \ldots \times A\to B$, which  corresponds to a linear map $S^nA\to B$.  It is the map $F$  we are looking for.    (We recover the Buchstaber--Rees formula~\eqref{eq.br}.) It remains to  check that so obtained $F$ is indeed an algebra homomorphism. This immediately follows from Corollary~\ref{coro.multpsi}, which tells that $F$ is multiplicative on the elements $a\otimes\ldots\otimes a$,  where it is $\psi_n(a)$, because the elements $a\otimes\ldots\otimes a$   span  the whole algebra $S^nA\,$. So  we have arrived at the desired map $\b\co {\alg}^n(A,B)\to {\alg}^1(S^nA,B)\,$, and the maps $\a$ and $\b$ are mutually inverse by the construction.

This concludes the
proof.

\section{Rational characteristic functions and $p|q$-homomorphisms}

Our notion of characteristic function readily suggests the next step in the investigation of linear maps of algebras with good properties (after the $n$-homomorphisms). We shall consider it now briefly.

Suppose the characteristic function $R_{\f}(a,z)$ of a linear map $\f\co A\to B$ is not a polynomial in $z$,
but a rational function that can be written as the
ratio of polynomials of degrees $p$ and $q$.
We call such a linear map a \textit{$p|q$-homomorphism}. One can deduce that for a $p|q$-homomorphism  the value $\f(1)$ must be an integer: $\chi=\f(1)=p-q$.

\begin{exas} The negative $-\f$ of a ring homomorphism
$\f$ is a $0|1$-homomorphism.  The difference $\f_{(p)}-\f_{(q)}$ of
a $p$-homomorphism $\f_{(p)}$ and a $q$-homomorphism $\f_{(q)}$ is a
$p|q$-homomorphism. In particular, a linear combination of algebra
homomorphisms of the form $\sum n_{\a} \f_{\a}$ where $n_{\a}\in
\ZZ$ is a $p|q$-homomorphism with $\chi=\sum n_{\a}$,
$p=\sum\limits_{n_{\a}>0} n_{\a}$, and $q=-\sum\limits_{n_{\a}<0}
n_{\a}$.  (This  follows from the exponential property of the
characteristic function.)
\end{exas}

The geometric  meaning of  $p|q$-homomorphisms is related with a certain generalization of the  notion of symmetric powers.

Consider a  topological space $X$. We define its  \textit{$p|q$-th
symmetric power} $\Sym^{p|q}(X)$  as the identification space of
$X^{p+q}=X^p\times X^q$ with respect to the action of the group $S_p\times S_q$ together with the
relations
$$(x_1,\ldots,x_{p-1},y,x_{p+1}\ldots,x_{p+q-1},y)\sim
(x_1,\ldots,x_{p-1},z,x_{p+1}\ldots,x_{p+q-1},z)\,.
$$
The algebraic analog of $\Sym^{p|q}(X)$ is the \textit{$p|q$-th
symmetric power}  of a commutative associative algebra
with unit $A$, which we denote $S^{p|q}A$. We define the algebra $S^{p|q}A$ as the subalgebra
$\mu^{-1}\left(S^{p-1}A\otimes S^{q-1}A\right)$ in $S^pA\otimes
S^qA$, where $\mu\co S^pA\otimes S^qA \to S^{p-1}A\otimes
S^{q-1}A\otimes A$ is the multiplication of the last arguments.

\begin{exa} For the algebra of polynomials in one variable $A=\CC[x]$, it can be shown that the algebra $S^{p|q}A$ is the algebra of
all polynomial invariants of $p|q$ by $p|q$ matrices.  In more detail, consider even matrices $p|q$ by $p|q$ where the entries are regarded as indeterminates of appropriate parity (over complex numbers). The general linear supergroup $GL(p|q)$ acts on such matrices by conjugation. `Polynomial invariants' of $p|q$ by $p|q$ matrices are the polynomial functions of the matrix entries invariant under such action. Denote their algebra by $I_{p|q}$. The problem is to describe the algebra $I_{p|q}$ in terms of functions of the $p+q$  eigenvalues $\lambda_i,\mu_{\a}$ similar to the classical case $q=0$. The restriction of a polynomial invariant of $p|q$ by $p|q$ matrices to the diagonal matrices is clearly a $S_p\times S_q$-invariant polynomial (or an element of $S^pA\otimes S^qA$ where $A=\CC[x]$). Such a polynomial $f(\lambda_1\,\ldots,\lambda_p,\mu_1,\ldots,\mu_q)$ separately symmetric in $\lambda_i$ and $\mu_{\a}$ extends to an element of $I_{p|q}$ if and only if it satisfies an extra condition which is equivalent to $f\in S^{p|q}A$\,. (Here there is a great difference with the rational invariants, for which no extra condition arises.) This is a
non-trivial theorem that can be traced to \cite{Ber1977} (see \cite[p. 294]{Ber1987}; see also \cite{Ser1982}.  See discussion in~\cite{KhudVo2005}.
\end{exa}

(The algebra described in the  example above and its deformations have recently become important   in integrable systems, see, e.g.,  \cite{SerVes2004}.)

There is a relation between the algebra homo\-morphisms $S^{p|q}A\to B$
and $p|q$-homomorphisms $A\to B$. To each homomorphism $S^{p|q}A\to
B$ canonically corresponds a $p|q$-homomorphism $A\to B$.

\begin{exa}
For the algebra of functions on a topological space $X$, the element $\xp=[x_1,\ldots,x_{p+q}]\in \Sym^{p|q}(X)$ defines a
$p|q$-homomorphism   $\ev_{\xp}\co C(X)\to \RR$ by the formula
\begin{equation*}
    a\mapsto a(x_1)+\ldots +a(x_{p})-\ldots
 -a(x_{p+q})\,.
\end{equation*}
This gives a natural map from $\Sym^{p|q}(X)$ to the dual space $A^*$ of the algebra $A=C(X)$,
which generalizes the Gelfand--Kolmogorov  map $X\to A^*$ and the Buchstaber--Rees  map $\Sym^n(X)\to A^*$.
\end{exa}

By using formulas from~\cite{KhudVo2005}, the condition that $\f\co A\to
B$ is a $p|q$-homomorphism can be expressed by the algebraic  equations
\begin{equation}\label{eq.vanhank}
\f(1)=p-q \quad \text{and} \quad
\begin{vmatrix}
      \ps_k(\f,a)  & \dots & \ps_{k+q}(\f,a)  \\
      \dots & \dots & \dots \\
      \ps_{k+q}(\f,a)  & \dots & \ps_{k+2q}(\f,a)  \\
    \end{vmatrix}=0
\end{equation}
for all $k\geq p-q+1$ and all $a\in A$, where $\psi_k(\f,a)=\ps_k(a)$ are the coefficients in the expansion of the characteristic function~\eqref{charfucnt}. The determinants arising in~\eqref{eq.vanhank} are the well-known Hankel determinants. The condition that they identically vanish   replaces the condition $\ps_k(a)=0$ for all $k\geq n+1$ for an $n$-homomorphism.

We see that the image of $\Sym^{p|q}(X)$ in $A^*$, where $A=C(X)$, under the map $\xp\mapsto \f=\ev_{\xp}$
satisfies equations~\eqref{eq.vanhank}. The system~\eqref{eq.vanhank}  can be regarded as a infinite
system of polynomial equations for the  `coordinates' $\f(a)$ of a point $\f$ in the infinite-dimensional vector space
$A^*$.

A conjectured statement is that the solutions of
equations~\eqref{eq.vanhank} give precisely the image of
$\Sym^{p|q}(X)$ in $A^*$. This would be an exact analog of the
Gelfand--Kolmogorov and Buchstaber--Rees theorems. The corresponding
algebraic statement should be a one-to-one correspondence between the
$p|q$-homomorphisms $A\to B$ and the algebra homomorphisms
$S^{p|q}A\to B$ extending the correspondence given by our formula~\eqref{eq.det}.

\bigskip\noindent
{\small
We thank V.~M.~Buchstaber for  discussions.
}

\end{document}